\documentclass[a4paper,11pt,leqno]{amsart}
\usepackage{amsmath}
\usepackage{amsfonts}
\usepackage{eucal}
\usepackage{amsthm}


\newcommand{\bigpare}[1]{\bigl(#1\bigr)}

\newcommand{\Bigbra}[1]{\Bigl\{#1\Bigr\}}

\newcommand{\bigbrac}[1]{\bigl[#1\bigr]}

\newcommand{\norm}[1]{\| #1 \|}
\newcommand{\bignorm}[1]{\bigl\| #1 \bigr\|}

\newcommand{\jap}[1]{\langle #1 \rangle}


\def\a{\alpha}
\def\b{\beta}
\def\c{\gamma}
\def\d{\delta}
\def\e{\varepsilon}

\def\i{\mbox{\raisebox{.5ex}{$\chi$}}}

\def\l{\lambda}
\def\m{\mu}

\def\s{\sigma}

\def\x{\xi}
\def\y{\eta}

\def\th{\theta}

\def\re{\mathbb{R}}
\def\co{\mathbb{C}}
\def\ze{\mathbb{Z}}

\def\pa{\partial}

\renewcommand{\Re}{\text{{\rm Re}\;}}
\renewcommand{\Im}{\text{{\rm Im}\;}}

\newcommand{\trace}{\text{{\rm Tr}} }

\newcommand{\Slash}[1]{{\ooalign{\hfil#1\hfil\crcr\raise.167ex\hbox{/}}}}


\newtheorem{thm}{Theorem}
\newtheorem{lem}[thm]{Lemma}
\newtheorem{prop}[thm]{Proposition}
\newtheorem{cor}[thm]{Corollary}

\theoremstyle{definition}

\theoremstyle{remark}
\newtheorem{rem}{Remark\!}



\begin{document}

\title[The Spectral Shift Function and The Friedel Sum Rule]{The Spectral Shift Function and \\ 
The Friedel Sum Rule}
\author[M. Kohmoto]{Mahito Kohmoto}
\author[T. Koma]{Tohru Koma}
\author[S. Nakamura]{Shu Nakamura}
\address[Kohmoto]{The Institute for Solid State Physics, The University of Tokyo, 
5-1-5 Kashiwanoha, Kashiwa, Chiba 277-8581, JAPAN}
\address[Koma]{Department of Physics, Gakushuin University, 
Mejiro, Toshima-ku, Tokyo 171-8588, JAPAN}
\email{tohru.koma@gakushuin.ac.jp}
\address[Nakamura]{Graduate School of Mathematical Sciences, 
University of Tokyo, 3-8-1 Komaba, Meguro Tokyo, 153-8914, JAPAN}
\email{shu@ms.u-tokyo.ac.jp}
\date{\today}
\maketitle 

\begin{abstract}
We study the relationship between the spectral shift function 
and the excess charge in potential scattering theory. 
Although these quantities are closely related to each other, 
they have been often formulated in different settings so far. 
Here we first give an alternative construction of  the spectral shift function, 
and then we prove that the spectral shift function thus constructed yields 
the Friedel sum rule.
\end{abstract}

\section{Introduction}

In physical systems, universal nature often reflects 
the global, geometrical (topological) structure of the system. 
For example, Gauss's law in classical electromagnetism is a consequence of 
the geometrical structure of the three-dimensional Euclidean space. 
It states that the flux $\Phi$ of the electric field ${\bf E}$ through 
any closed surface $\Sigma$ is proportional to the total charge $Q$  
enclosed by the surface $\Sigma$: 
\[
\Phi=\int_\Sigma{\bf E}\cdot d{\bf a}=C Q
\]
with a constant $C$. 

In this paper, we study an analogue to Gauss's law in 
scattering theory in quantum mechanics. 
Let us consider a metal with a single impurity at zero temperature. 
The impurity potential scatters the conduction electrons, 
and changes their charge distribution. 
For a fixed Fermi energy ${\mathcal E}_F$, 
the ``excess charge" $Z({\mathcal E}_F)$ due to the impurity 
is defined to be the difference 
between the total numbers of levels in the Fermi sea  
with and without the impurity.  
Then the excess charge $Z({\mathcal E}_F)$ equals the total phase shifts $\theta({\mathcal E}_F)$ of 
the scattering matrix $S({\mathcal E}_F)$ for the impurity potential:
\begin{equation}
\theta({\mathcal E}_F):=\frac{1}{2\pi i}\log {\rm det}S({\mathcal E}_F)=Z({\mathcal E}_F).
\label{FSR}
\end{equation}
This is known as the Friedel sum rule \cite{Friedel} in solid state physics \cite{Kittel}. 

Since the excess charge $Z({\mathcal E}_F)$ is formally written in terms of 
the trace of the difference between the spectral projection operators with and 
without the impurity potential, it is closely related to the spectral shift function 
(SSF) which was initiated by Lifshitz \cite{Lifshitz}, and then rigorously defined by Krein \cite{Kr}. 
We briefly describe the previous construction of the SSF. 
Let $H$ and $H_0$ be a pair of self-adjoint operators. Then the SSF $\x(\cdot)$ 
is defined as a function on $\re$ satisfying the following property: 
If $f\in C_0^\infty(\re)$, then
\[
\trace\left[f(H)-f(H_0)\right]=-\int f'(\l) \x(\l) d\l.
\]
Here we note that this formula fixes $\x(\cdot)$ up to an additive constant. 
The SSF is known to exist\footnote{\ See, e.g., Birman-Yafaev \cite{BY} or Yafaev \cite{Y,Y2}.} 
if, for example,  $(H+i)^{-m}-(H_0+i)^{-m}$ is a trace class operator with some $m>0$.

Formally, the SSF is written 
\begin{equation}
\x(\l) =\trace\left[E_H(\l)-E_{H_0}(\l)\right],
\label{xiformal}
\end{equation}
where $E_A(\cdot)$ denotes the spectral projection of a self-adjoint
operator $A$. (This formal expression (\ref{xiformal}) is nothing but the excess charge!)  
It is well-known, however, $E_H(\l)-E_{H_0}(\l)$ is not necessarily in
the trace class, even when the above assumption is satisfied \cite{KoM,Kr}. 

As is well known, there are two standard constructions of the SSF.\footnote{\ See also Pushnitski~\cite{P} 
and references therein for a more sophisticated representation of the SSF.} 
The first one is due to Krein who defines the SSF 
as a locally $L^1$ function on $\re$. This construction requires relatively weak assumptions, 
and the definition is global in $\l$.
However, the existence of $\x(\l)$ for a fixed $\l\in\re$ is not obvious in this construction.
The other construction is to compute the difference of the spectral functions.
Namely, under certain conditions, one can define
\[
\x'(\l) = \trace\left[E_H'(\l)- E_{H_0}'(\l)\right]
\]
for $\l$ in a ``regular'' energy region. This method is widely used in the semiclassical
and microlocal study of the SSF.\footnote{\ See Robert \cite{R} and references therein.}
The advantage of this method is that one can study the behavior of
$\x(\l)$ in detail locally in $\l$.  On the other hand, $\x(\l)$ is not defined
globally in $\l$,
and the method requires slightly stronger assumptions on the perturbation. 

We also remark that the behavior of finite-volume spectral shift functions 
for a large volume is studied in refs.~\cite{GN1,GN2,HM,HM2,Kirsch,R,RoS}. 
In particular, under a certain condition, a sequence of finite-volume spectral 
shift functions is shown to converge to the SSF 
in the infinite-volume limit \cite{GN1,GN2,HM,HM2}. 

We propose another construction of the spectral shift function, 
$\x(\l)=\x(\l;H,H_0)$,  
for a pair of Hamiltonians, $H=-\triangle+V$ and $H_0=-\triangle$, 
on $L^2(\re^n)$. We assume that the potential $V$ satisfies
\begin{equation}
|V(x)|\leq C\jap{x}^{-\a}, \quad x\in\re^n
\label{assumpV}
\end{equation}
with some $\a>n+3$ and some $C>0$, where we have written $\jap{x}:=\sqrt{1+|x|^2}$. 
The idea for our construction is to show the existence of the boundary value 
of the perturbation determinant directly using the stationary scattering theory. 
This is a variation of Krein's construction, but we can prove that 
$\x(\l)$ is defined for each $\l\in(0,\infty)$ and 
continuous in the same region.

As an application, we consider the Friedel sum rule. We first define 
the finite-volume excess charge $Z_R(\lambda)$ due to the impurity potential $V$ by 
\[
Z_R(\lambda):={\rm Tr}\left[\vartheta_R(E_H(\lambda)
-E_{H_0}(\lambda))\vartheta_R\right], 
\]
where $\vartheta_R(x)=\vartheta(x/R)$ is a cutoff function 
with a large $R$ and with $\vartheta\in C_0^\infty(\re^n)$ 
satisfying $\vartheta=1$ in a neighborhood of $x=0$. Then we can prove 
\[
Z(\lambda):=\lim_{R\rightarrow\infty}Z_R(\lambda)
=\x(\lambda)\quad \mbox{for }\ \lambda\in(0,\infty).
\]
Namely, the excess charge $Z(\cdot)$ in the infinite-volume limit is equal to 
the SSF $\x(\cdot)$. 
On the other hand, the total phase shift $\theta(\cdot)$ for 
the scattering matrix $S(\cdot)$ is equal to $\x(\cdot)$ from the Birman-Krein formula. 
{From} these, we obtain that the Friedel sum rule (\ref{FSR}) holds 
for ${\mathcal E}_F\in(0,\infty) $ in arbitrary dimensions. 

The present paper is organized as follows: 
In Section~\ref{sec:SSF}, we first describe our method to construct 
the SSF in three or lower dimensions, and then extend it to higher dimensions.
In Section~\ref{sec:FSR}, we prove that the SSF is equal to the excess charge.

\section{Construction of the Spectral Shift Function}
\label{sec:SSF}

We construct the SSF for potential scattering theory. 
First, we describe our abstract scheme for the construction, and then prove 
the existence of the SSF. Consider a pair of Hamiltonians, 
\[
H=H_0+V, \quad H_0=-\triangle \quad \text{on} \ L^2(\re^n).
\]
We suppose the potential $V$ satisfies the bound (\ref{assumpV}) with $\alpha>n+3$. 
We may allow $V$ to have some singularities, but assume that 
it is bounded for simplicity. By the invariance principle, we construct $\x(\l)$ as
\[
\x(\l;H,H_0) = -\x((\l+M)^{-\ell}; (H+M)^{-\ell}, (H_0+M)^{-\ell})
\]
with some integer $\ell>0$ and a sufficiently large $M>0$, where
$\x(\l;A,A_0)$ denotes the SSF for a pair $A$ and $A_0$.  
Here we choose $M$ so that both $A$ and $A_0$ are bounded. We recall the
SSF is defined as
\[
\x(\l;A,A_0) = - \lim_{z\to \l+i0} \frac{1}{\pi} \Im \log \Delta_{A/A_0}(z),
\]
where $\Delta_{A/A_0}(z)$ denotes the perturbation determinant defined by
\[
\Delta_{A/A_0}(z) = \det\left[(A-z)(A_0-z)^{-1}\right] \quad \text{for} \
z\in\co\setminus(\s(A)\cup\s(A_0)).
\]
It is easy to see that $\Delta_{A/A_0}(z)$ is well-defined if $A-A_0$ is
trace class, and that it is analytic in $z$. Moreover,
\[
\Delta_{A/A_0}^{-1}(z)\Delta_{A/A_0}'(z) =-\trace\left[(A-z)^{-1}-(A_0-z)^{-1}\right],
\]
and $\x(\l;A,A_0) =0$ if $\l>\sup(\s(A)\cup\s(A_0))$ 
[or if $\l<\inf(\s(A)\cup\s(A_0))$]. Hence we have an expression of the SSF:
\begin{equation}
\x(\l;A,A_0) = \lim_{z\to\l+i0} \frac{1}{\pi}\> \Im \int_{\c_z}
\trace\left[(A-w)^{-1}-(A_0-w)^{-1}\right]dw,
\label{expSSF}
\end{equation}
where $\c_z$ denotes a contour in $\co_+:=\{z\in\co \; |\; \Im z>0\}$ such that
$\c_z(0)=k>\sup(\s(A)\cup\s(A_0))$ and $\c_z(1)=z$.
Note that this expression is consistent with the formal formula (\ref{xiformal}) by
virtue of Stone formula.

\subsection{Dimensions $n\leq 3$}

First, we prove the existence of $\x(\cdot)$ in dimensions $n\leq 3$. 
In the next section, we treat the case in dimensions $n\ge 4$. 
In this section, we set $\ell=1$, namely,
\begin{equation}
A=(H+M)^{-1}, \quad A_0=(H_0+M)^{-1}
\label{AA0}
\end{equation}
with a sufficiently large (fixed) $M>0$ so that both $A$ and $A_0$ are bounded. 
Then it is well-known that 
$A-A_0\in\mathcal {I}_1$,
where $\mathcal {I}_p$ denotes the $p$-th trace ideal.\footnote{\label{traceideal}\ See, e.g., \cite{RS}, 
Vol.3, Appendix~2 to Section~XI.3 for the criterion for the trace ideal.} 
Hence, $\Delta_{A/A_0}(z)$ is well-defined, and the above definition applies.
Now the key estimate of our construction is the following: We denote
\[
\m(z)= (z+M)^{-\ell} = (z+M)^{-1}.
\]

\begin{prop}
\label{basicpro}
Let $\l\in(0,\infty)$. Then
\[
\lim_{z\to \l+i0} \trace\left[(A-\m(z))^{-1} - (A_0-\m(z))^{-1}\right]
\]
exists, and the limit is continuous in $\l$ in $(0,\infty)$.
\end{prop}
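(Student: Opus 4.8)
The plan is to strip off the functional-calculus dressing and reduce everything to a limiting absorption estimate for the resolvent difference of $H$ and $H_0$. First I would use the elementary identity
\[
(A-\m(z))^{-1}=-(z+M)-(z+M)^2(H-z)^{-1},
\]
together with the analogous one for $A_0$, to obtain
\[
\trace\bigl[(A-\m(z))^{-1}-(A_0-\m(z))^{-1}\bigr]=-(z+M)^2\,\trace\bigl[(H-z)^{-1}-(H_0-z)^{-1}\bigr].
\]
Since the prefactor $-(z+M)^2$ is continuous up to the real axis, it suffices to show that $\t(z):=\trace[(H-z)^{-1}-(H_0-z)^{-1}]$ has a finite limit as $z\to\l+i0$ which is continuous in $\l\in(0,\infty)$.

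I would then separate the single dangerous contribution by the symmetric second-order resolvent expansion
\[
(H-z)^{-1}-(H_0-z)^{-1}=-(H_0-z)^{-1}V(H_0-z)^{-1}+(H_0-z)^{-1}V(H-z)^{-1}V(H_0-z)^{-1},
\]
chosen so that the full resolvent in the remainder is flanked by two copies of $V$. The first (Born) term I would evaluate explicitly: by cyclicity of the trace,
\[
\trace\bigl[(H_0-z)^{-1}V(H_0-z)^{-1}\bigr]=\trace\bigl[V(H_0-z)^{-2}\bigr]=c_n(z)\int_{\re^n}V(x)\,dx,
\]
where $c_n(z)=(2\pi)^{-n}\int_{\re^n}(|p|^2-z)^{-2}\,dp$ is the (constant, by translation invariance) value on the diagonal of the kernel of $(H_0-z)^{-2}$. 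Here $n\le3$ is exactly what makes the $p$-integral converge at infinity, and although the integrand acquires a double pole on $\{\,|p|^2=\l\,\}$, one integration by parts in the radial variable shows that $c_n(\l+i0)$ is finite and depends continuously on $\l>0$. This explicit evaluation is needed because, carrying only a single sandwiched resolvent, the first Born term is merely Hilbert--Schmidt and not trace class, so it escapes the scheme used below for the remainder.

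For the remainder I would factor $V=\jap{x}^{-s}u\jap{x}^{-s}$ with $s=\a/2$ and $u\in L^\infty(\re^n)$, and rearrange cyclically into
\[
\trace\Bigl[\bigl\{\jap{x}^{-s}(H_0-z)^{-2}\jap{x}^{-s}\bigr\}\,u\,\bigl\{\jap{x}^{-s}(H-z)^{-1}\jap{x}^{-s}\bigr\}\,u\Bigr].
\]
The aim is to prove that each braced operator lies in $\mathcal{I}_2$ and converges in Hilbert--Schmidt norm as $z\to\l+i0$; the intervening factors $u$ being bounded, the whole product then converges in $\mathcal{I}_1$, so that its trace converges and is continuous in $\l$. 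For the free factor $\jap{x}^{-s}(H_0-z)^{-2}\jap{x}^{-s}$ this is a direct estimate on the explicit kernel of $(H_0-z)^{-2}$: for $n\le3$ the diagonal is non-singular, and the non-decaying oscillatory tail is square-integrable against $\jap{x}^{-\a}\jap{y}^{-\a}$ because $\a>n+3$, whence dominated convergence yields the Hilbert--Schmidt boundary value. For the full factor $R^s(z):=\jap{x}^{-s}(H-z)^{-1}\jap{x}^{-s}$ I would pass from $H_0$ to $H$ via the weighted resolvent equation $R^s(z)=(I+R_0^s(z)u)^{-1}R_0^s(z)$, where $R_0^s(z)=\jap{x}^{-s}(H_0-z)^{-1}\jap{x}^{-s}$, using that $R_0^s$ already has Hilbert--Schmidt boundary values.

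The main obstacle I anticipate is precisely this upgrade of the limiting absorption principle from norm to Hilbert--Schmidt convergence of the sandwiched free resolvent and of its square, for it is here that both hypotheses enter: $n\le3$ for the integrability of the diagonal singularity, and the strong decay $\a>n+3$ to absorb the extra factor $|x-y|$ that $(H_0-z)^{-2}$ produces in the tail. A secondary point that must be checked is that $I+R_0^s(\l+i0)u$ is boundedly invertible for every $\l\in(0,\infty)$; since $R_0^s(\l+i0)u$ is compact this is a Fredholm statement, and its kernel is trivial because $H$ has no positive eigenvalues under the decay $|V|\le C\jap{x}^{-\a}$ with $\a>1$ (Kato's theorem), so the inverse exists and is continuous in $\l$, closing the argument.
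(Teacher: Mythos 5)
Your argument is correct, but it takes a genuinely different route from the paper's. You strip off the functional calculus immediately and reduce to $\trace[(H-z)^{-1}-(H_0-z)^{-1}]$, expand to second order in $V$, evaluate the first Born term explicitly as $c_n(z)\int V\,dx$ from the diagonal of the kernel of $(H_0-z)^{-2}$, and control the remainder in \emph{trace norm} by proving Hilbert--Schmidt boundary values for $\jap{x}^{-s}(H_0-z)^{-2}\jap{x}^{-s}$ and $\jap{x}^{-s}(H-z)^{-1}\jap{x}^{-s}$. The paper instead never lets the trace-class burden approach the boundary: it writes the trace as $-\trace\bigl[W\jap{x}^{-\b}(A_0-\m(z))^{-1}(A-\m(z))^{-1}\jap{x}^{-\b}\bigr]$ with the single $z$-independent trace-class operator $W=\jap{x}^{\b}(A-A_0)\jap{x}^{\b}$, so that only \emph{operator-norm} convergence of the weighted resolvent product is needed (Lemma~\ref{lem:2RB}), and that is exactly the Agmon limiting absorption principle with weights $\c>1/2$ (and $\b>3/2$ for the squared free resolvent). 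Your approach buys an explicit formula for the Born contribution and the stronger statement that the second-order remainder converges in trace norm; it costs dimension-dependent kernel computations (the convergence of $\int(|p|^2-z)^{-2}dp$ and the square-integrability of the weighted kernels are precisely where $n\le 3$ enters), a separate distributional argument for the boundary value of $c_n(z)$, and a re-derivation of the weighted LAP for $H$ via Fredholm inversion of $I+R_0^s(\l+i0)u$, which the paper simply cites from Agmon; the paper's softer scheme also carries over to $n\ge 4$ with minimal change in \S2.2, whereas yours would need further iterations of the Born expansion there. One small imprecision: for $\Im z>0$ the operator $V(H_0-z)^{-2}$ is in fact trace class (by the standard Birman--Solomyak criterion for $f(x)g(-i\nabla)$), so ``not trace class'' is not the right diagnosis; the genuine obstruction is that its trace norm is not controlled as $z\to\l+i0$, because $(|p|^2-\l-i0)^{-2}$ has a non-locally-integrable singularity on the energy sphere --- your explicit evaluation is the correct fix for exactly that reason.
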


\begin{rem}
We do not prove (or claim)
$(A-\m(\l+i0))^{-1}-(A_0-\m(\l+i0))^{-1}\in\mathcal {I}_1$.
We only prove the existence of the limit of the trace.
\end{rem}

Now combining Proposition~\ref{basicpro} with the formula (\ref{expSSF}), 
we obtain an alternative proof of the following result on the SSF:\footnote{\ For 
the cases $n=2,3$, see, e.g., \cite{Y2}, Theorem~9.1.14.}  

\begin{cor}
The SSF $\x(\l)$ exists for $\l\in (0,\infty) $, and $\x(\cdot)$
is continuous in $(0,\infty) $.
\end{cor}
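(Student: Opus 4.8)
The plan is to read off the corollary from Proposition~\ref{basicpro} together with the representation (\ref{expSSF}), the genuine analysis having already been carried out in the proposition. First I would invoke the invariance principle: since $\x(\l;H,H_0)=-\x(\m(\l);A,A_0)$ and $\m(\l)=(\l+M)^{-1}$ carries $(0,\infty)$ homeomorphically onto $(0,1/M)$, which is the interior of $\s(A_0)$, it suffices to show that $\x(\nu;A,A_0)$ exists and is continuous for $\nu=\m(\l)$ with $\l\in(0,\infty)$. For the pair $A,A_0$ of (\ref{AA0}) the integrand
\[
g(w):=\trace\bigbrac{(A-w)^{-1}-(A_0-w)^{-1}}
\]
in (\ref{expSSF}) is well defined and holomorphic on $\co_+$, because $A-A_0\in\mathcal{I}_1$ makes the resolvent difference trace class off the real axis.

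Second, since $g$ is holomorphic on $\co_+$, Cauchy's theorem lets me replace $\c_z$ by any homotopic contour with the same endpoints. Writing the spectral parameter of the bounded pair as $\nu$, I would split the contour into a fixed initial part $\Gamma$, from $k$ to a base point $\nu+ic_0$ at a fixed height $c_0>0$, followed by a short terminal segment descending from $\nu+ic_0$ to the boundary point $\nu$. The initial contribution $\int_\Gamma g\,dw$ stays a fixed distance from the real axis, so it depends continuously on $\nu$; both the existence of the boundary limit in (\ref{expSSF}) and the $\nu$-dependence of $\x$ are thereby concentrated in the terminal segment.

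Third, I would feed Proposition~\ref{basicpro} into the terminal segment. The proposition supplies, for each $\nu\in(0,1/M)$, a finite boundary value of $g$ at $\nu$ and its continuity in the parameter. Its limit is taken as the energy tends to $\l+i0$, under which $\m(z)$ approaches $\nu$ from the lower half plane, whereas (\ref{expSSF}) calls for the value at $\nu+i0$; the two are reconciled by the Schwarz reflection $g(\bar w)=\overline{g(w)}$. Finiteness of the boundary value keeps $g$ bounded along the terminal segment, so its integral converges and $\x(\nu;A,A_0)$ exists; continuity of the boundary value then yields continuity of $\x$. In fact the resulting primitive obeys $\frac{d}{d\nu}\x(\nu;A,A_0)=\frac{1}{\pi}\Im g(\nu+i0)$, so one even obtains that $\x(\l)$ is $C^1$ on $(0,\infty)$.

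The one step that really needs care is the interchange of the limit with the contour integral along the terminal segment: showing not merely that the segment integral converges for each fixed $\nu$, but that the limit is continuous in $\nu$. This is exactly where the \emph{continuity} clause of Proposition~\ref{basicpro}, rather than bare pointwise existence of the trace limit, is indispensable; for full rigour one would extract from the proof of the proposition a bound on the boundary value that is locally uniform in $\nu$, which legitimizes passing the limit through the integral. Beyond this I expect no essential difficulty, since all the hard work, namely the existence of the trace limit itself, has been absorbed into Proposition~\ref{basicpro}.
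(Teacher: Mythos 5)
Your proposal is correct and follows essentially the same route as the paper, which offers no separate argument for the corollary beyond the single remark that it follows by combining Proposition~1 with the contour representation (\ref{expSSF}); you simply make that combination explicit (contour splitting, the reflection $g(\bar w)=\overline{g(w)}$ to reconcile the orientation of the boundary limit, and the use of the continuity clause of Proposition~1 to pass the limit through the terminal integral). The extra observations, including the $C^1$ statement with $\x'(\nu)=\tfrac{1}{\pi}\Im g(\nu+i0)$, are consistent with what the proof of Lemma~\ref{lem:2RB} actually delivers (norm continuity of the relevant operators up to the boundary).
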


Throughout the present paper, we fix $\b$ so that 
\begin{equation}
 3/2<\b<(\a-n)/2,
\label{conbeta} 
\end{equation}
and we define 
\begin{equation}
W:= \jap{x}^\b (A-A_0)\jap{x}^\b.
\label{defW}
\end{equation}

\begin{proof}[Proof of Proposition~\ref{basicpro}]
In the present case, we have 
\[
W=-\jap{x}^\b (H+M)^{-1} V(H_0+M)^{-1}\jap{x}^\b
\]
from (\ref{AA0}). Therefore, from the assumption (\ref{assumpV}) on the potential $V$ 
and the above condition (\ref{conbeta}) for $\beta$, 
we get $W\in \mathcal{I}_1$ by using the standard commutator computations. 

On the other hand, for $z\notin \s(H)\cup \s(H_0)$, we have 
\[
(A-\m(z))^{-1}-(A_0-\m(z))^{-1} =-(A-\m(z))^{-1}(A-A_0)(A_0-\m(z))^{-1}
\in\mathcal {I}_1.
\]
Combining this, the definition (\ref{defW}) of $W$ and the above result $W\in \mathcal{I}_1$, 
we obtain
\begin{align}
\label{diffResolA}
&\trace\left[(A-\m(z))^{-1}-(A_0-\m(z))^{-1}\right] \\ \nonumber
&\quad = -\trace\left[(A-\m(z))^{-1} \jap{x}^{-\b} W \jap{x}^{-\b}
(A_0-\m(z))^{-1}\right]\\ \nonumber
&\quad = -\trace\left[W
\jap{x}^{-\b}(A_0-\m(z))^{-1}(A-\m(z))^{-1}\jap{x}^{-\b}\right].
\end{align}
Now in order to complete the proof, it suffices to show the following lemma.
\end{proof}

\begin{lem}
\label{lem:2RB}
For $\l\in(0,\infty)$,
\[
\lim_{z\to\l+i0}\jap{x}^{-\b} (A_0-\m(z))^{-1}(A-\m(z))^{-1}\jap{x}^{-\b}
\]
exists in $B(L^2(\re^n))$, and the limit is continuous in $\l$ in 
$(0,\infty) $.
\end{lem}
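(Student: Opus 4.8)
The plan is to reduce the claim to the limiting absorption principle (LAP) for the free and perturbed resolvents by exploiting the invariance-principle algebra. Write $R_0(z)=(H_0-z)^{-1}$ and $R(z)=(H-z)^{-1}$. Since $A_0=(H_0+M)^{-1}$ and $\m(z)=(z+M)^{-1}$, the operator $A_0-\m(z)$ is a function of $H_0$, and a direct computation based on $(H_0+M)R_0(z)=I+(z+M)R_0(z)$ gives
\[
(A_0-\m(z))^{-1}=-(z+M)(H_0+M)R_0(z)=-(z+M)I-(z+M)^2R_0(z),
\]
and likewise $(A-\m(z))^{-1}=-(z+M)I-(z+M)^2R(z)$. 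Multiplying these out yields
\[
(A_0-\m(z))^{-1}(A-\m(z))^{-1}=(z+M)^2+(z+M)^3R_0(z)+(z+M)^3R(z)+(z+M)^4R_0(z)R(z).
\]
Conjugating by $\jap{x}^{-\b}$, the task reduces to producing boundary values at $z=\l+i0$, continuous in $\l\in(0,\infty)$, of the four sandwiched terms; the prefactors $(z+M)^k$ are analytic near $z=\l$ and cause no trouble.

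The constant term $(z+M)^2\jap{x}^{-2\b}$ is immediate. For the single-resolvent terms, set $G_0(z)=\jap{x}^{-\b}R_0(z)\jap{x}^{-\b}$ and $G(z)=\jap{x}^{-\b}R(z)\jap{x}^{-\b}$. Since $\b>3/2>1/2$, the classical LAP for $H_0=-\triangle$ gives that $G_0(z)$ extends continuously to $\l+i0$ for $\l\in(0,\infty)$. For $G(z)$, introduce $\tilde V:=\jap{x}^\b V\jap{x}^\b$, which is a bounded (indeed decaying) multiplication operator since $2\b<\a-n$ by (\ref{conbeta}). The resolvent identity then gives $G(z)=(I+G_0(z)\tilde V)^{-1}G_0(z)$, and for $\l>0$ the operator $I+G_0(\l+i0)\tilde V$ is invertible (absence of positive eigenvalues and resonances under the decay $\a>n+3$); hence $G(z)$ also extends continuously to $\l+i0$.

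The delicate term is the double resolvent $\jap{x}^{-\b}R_0(z)R(z)\jap{x}^{-\b}$. Here I would use $R(z)=R_0(z)-R_0(z)VR(z)$ to write $R_0(z)R(z)=R_0(z)^2-R_0(z)^2VR(z)$, and then split $V=\jap{x}^{-\b}\tilde V\jap{x}^{-\b}$, obtaining
\[
\jap{x}^{-\b}R_0(z)R(z)\jap{x}^{-\b}=G_0'(z)-G_0'(z)\,\tilde V\,G(z),
\]
where $G_0'(z)=\jap{x}^{-\b}R_0(z)^2\jap{x}^{-\b}$ since $\tfrac{d}{dz}R_0(z)=R_0(z)^2$. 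This term carries one extra factor of the resolvent, so its weighted boundary value demands one additional order of weight, i.e. $\b>3/2$. This is the main obstacle, and it is the precise reason for the lower bound on $\b$ in (\ref{conbeta}): without $\b>3/2$ the squared free resolvent $(H_0-z)^{-2}$ would not admit a continuous weighted boundary value on $(0,\infty)$. Granting this $C^1$ LAP for $H_0$ together with the continuity of $G(z)$ established above, the right-hand side extends continuously to $\l+i0$; collecting the four terms then yields the asserted limit in $B(L^2(\re^n))$, continuous in $\l\in(0,\infty)$.
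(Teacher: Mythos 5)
Your proof is correct and follows essentially the same route as the paper: expand $(A_0-\m(z))^{-1}$ and $(A-\m(z))^{-1}$ as polynomials in $(z+M)$ times resolvents, and reduce everything to the weighted limiting absorption principle for $(H_0-z)^{-1}$, $(H_0-z)^{-2}$ (this is where $\b>3/2$ enters) and $(H-z)^{-1}$, the last of which the paper simply cites from Agmon while you rederive it via $(I+G_0(z)\tilde V)^{-1}G_0(z)$ together with the absence of positive eigenvalues. The only cosmetic difference is that the paper substitutes $R=R_0-R_0VR$ before multiplying, obtaining terms $a_j(z)(H_0-z)^{-k}$ and $a_j(z)(H_0-z)^{-k}V(H-z)^{-1}$, whereas you keep $R(z)$ and $R_0(z)R(z)$ and apply the same resolvent identity afterwards.
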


\begin{proof}
At first we note that there are no positive eigenvalues\footnote{\ See \cite{Kato} or  \cite{RS}, Vol.4,  
Theorem~XIII.58.} under our assumption. Hence $\l$ is not an eigenvalue. We have 
\[
A_0-\m(z)=(H_0+M)^{-1} -(z+M)^{-1} =-(z+M)^{-1}(H_0-z)(H_0+M)^{-1}
\]
and hence
\begin{align*}
(A_0-\m(z))^{-1} &= -(z+M)(H_0+M)(H_0-z)^{-1} \\
&= -(z+M)-(z+M)^2(H_0-z)^{-1}.
\end{align*}
Similarly, we have
\begin{align*}
(A-\m(z))^{-1} &= -(z+M)-(z+M)^2(H-z)^{-1} \\
&=- (z+M) -(z+M)^2(H_0-z)^{-1} \\
&\qquad +(z+M)^2 (H_0-z)^{-1}V(H-z)^{-1}.
\end{align*}
Thus we have
\begin{multline*}
(A_0-\m(z))^{-1} (A-\m(z))^{-1} 
= a_0(z) +a_1(z)(H_0-z)^{-1} +a_2(z) (H_0-z)^{-2}\\ 
+ a_3(z) (H_0-z)^{-1} V(H-z)^{-1}+a_4(z) (H_0-z)^{-2} V(H-z)^{-1},
\end{multline*}
where $a_j(z)$ are polynomials in $z$. Recall $\beta>3/2$ in the condition (\ref{conbeta}) 
for $\beta$. Since
\[
\jap{x}^{-\c}(H_0-z)^{-1}\jap{x}^{-\c}, \
\jap{x}^{-\b}(H_0-z)^{-2}\jap{x}^{-\b},\ \text{and
}\jap{x}^{-\c}(H-z)^{-1}\jap{x}^{-\c}
\]
(with $\c>1/2$) are bounded and continuous\footnote{\ See Agmon~\cite{A} or Reed-Simon~\cite{RS} 
Vol.4, Section~XIII.8.} in a complex neighborhood of $\l$ in $\co_+$, we conclude the assertion.
\end{proof}

\subsection{Dimensions $n\geq 4$}

If $n\geq 4$, we set
\[
A=(H+M)^{-\ell}, \quad A_0=(H_0+M)^{-\ell}
\]
with $\ell\in\ze$ such that $n/2-1< \ell\leq n/2$. Then we have
\begin{align*}
A-A_0 &= -\sum_{j=1}^\ell (H+M)^{-j} V (H_0+M)^{-\ell-1+j} \\
&= -\sum_{j=1}^\ell (H_0+M)^{-j} V (H_0+M)^{-\ell-1+j} \\
&\quad + \sum_{j=1}^\ell\sum_{k=1}^j (H+M)^{-k} V (H_0+M)^{-j-1+k} V
(H_0+M)^{-\ell-1+j} \\
&= \cdots.
\end{align*}
Iterating this procedure $\ell$-times, and using the fact\footnote{\ See Footnote~\ref{traceideal}.} 
$V(H_0+M)^{-j}\in\mathcal {I}_p$
for $p>n/(2j)$, we learn that $A-A_0\in\mathcal {I}_1$.
Then the main part of the proof of
Proposition~\ref{basicpro} can be modified accordingly.

In order to modify the proof of Lemma~\ref{lem:2RB}, we use
\begin{align*}
A_0-\m(z) &= -\sum_{j=1}^\ell (z+M)^{-j} (H_0-z) (H_0+M)^{-\ell-1+j} \\
&= -(H_0-z)(H_0+M)^{-1}L_0(z),
\end{align*}
where $\m(z)=(z+M)^{-\ell}$, and we have written  
\[
L_0(z)=\sum_{j=1}^\ell (z+M)^{-j} (H_0+M)^{-\ell+j}. 
\]
Since $\Re(z+M)>M$ if $z\sim\l>0$, $L_0(z)$ is invertible.
In consequence, we obtain
\begin{align*}
(A_0-\m(z))^{-1} &= -L_0^{-1}(z)(H_0+M)(H_0-z)^{-1} \\
&= -L_0^{-1}(z)\left[1+(z+M)(H_0-z)^{-1}\right].
\end{align*}
We also write 
\[
L(z)=\sum_{j=1}^\ell (z+M)^{-j} (H+M)^{-\ell+j}.
\]
Then we have
\begin{align*}
&(A_0-\m(z))^{-1}(A-\m(z))^{-1}\\  
&=L_0^{-1}(z)\Bigl\{a_0(z) +a_1(z)(H_0-z)^{-1} 
+a_2(z) (H_0-z)^{-2}\\ 
&+a_3(z) (H_0-z)^{-1} V(H-z)^{-1}+
a_4(z) (H_0-z)^{-2} V(H-z)^{-1}\Bigr\} L^{-1}(z)
\end{align*}
with some polynomials $a_j(z)$ in $z$. 
Moreover, using the standard weight estimates,
\[
\jap{x}^\c (H_0+M)^{-1}\jap{x}^{-\c}, \quad
\jap{x}^\c (H+M)^{-1}\jap{x}^{-\c} \in B(L^2(\re^n)),
\]
we can carry out the same argument as in the proof of 
Lemma~\ref{lem:2RB}. Consequently, we have: 

\begin{prop}
Let $\l\in(0,\infty)$. Then
\[
\lim_{z\to \l+i0} \trace\left[(A-\m(z))^{-1} - (A_0-\m(z))^{-1}\right]
\]
exists, and the limit is continuous in $\l$ in $(0,\infty)$.
Moreover, the SSF $\x(\l)$ exists for $\l\in
(0,\infty) $, and $\x(\cdot)$
is continuous in $(0,\infty) $.
\end{prop}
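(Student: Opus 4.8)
The plan is to follow the template already established for dimensions $n\leq 3$: reduce the existence of the trace limit to an operator-norm limit of a suitably weighted product of resolvents, and then invoke the limiting absorption principle. The two genuinely new features when $n\geq 4$ are the trace-class bookkeeping required because $A-A_0$ is now an $\ell$-fold iterated expression rather than a single product, and the presence of the auxiliary factors $L_0^{-1}(z)$ and $L^{-1}(z)$, which are absent when $\ell=1$. Everything else is a transcription of the proofs of Proposition~\ref{basicpro} and Lemma~\ref{lem:2RB}.

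First I would upgrade the membership $A-A_0\in\mathcal{I}_1$ noted above to the weighted statement $W=\jap{x}^\b(A-A_0)\jap{x}^\b\in\mathcal{I}_1$, with $\b$ as in (\ref{conbeta}). Starting from the fully iterated expansion of $A-A_0$ displayed above, each resulting term is a product in which the total number of resolvent powers is $\ell+1$, interspersed with factors of $V$. Inserting $\jap{x}^\b$ at both ends and distributing the weights across the $V$'s using (\ref{assumpV}) with $\a>n+3$ — so that each $V$ absorbs a weight $\jap{x}^{2\b}$ while retaining decay $\jap{x}^{-(\a-2\b)}$, which is admissible since $2\b<\a-n$ — one checks via the trace-ideal criterion of Footnote~\ref{traceideal} that every term lies in $\mathcal{I}_1$. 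The range $n/2-1<\ell\leq n/2$ is exactly what guarantees that the accumulated smoothing from the $\ell+1$ resolvent powers suffices to reach the trace class. With this in hand, the factorization (\ref{diffResolA}) carries over verbatim.

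Next I would prove the analogue of Lemma~\ref{lem:2RB}, namely that
\[
\lim_{z\to\l+i0}\jap{x}^{-\b}(A_0-\m(z))^{-1}(A-\m(z))^{-1}\jap{x}^{-\b}
\]
exists in $B(L^2(\re^n))$ and is continuous in $\l\in(0,\infty)$. Here I start from the factorized formula $(A_0-\m(z))^{-1}(A-\m(z))^{-1}=L_0^{-1}(z)\{\cdots\}L^{-1}(z)$ derived above and sandwich it between the weights, writing it as the product of $\jap{x}^{-\b}L_0^{-1}(z)\jap{x}^{\b}$, the weighted bracket $\jap{x}^{-\b}\{\cdots\}\jap{x}^{-\b}$, and $\jap{x}^{\b}L^{-1}(z)\jap{x}^{-\b}$. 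The two outer factors are controlled by the weight estimates $\jap{x}^\c(H_0+M)^{-1}\jap{x}^{-\c},\ \jap{x}^\c(H+M)^{-1}\jap{x}^{-\c}\in B(L^2(\re^n))$ recorded above: since $L_0$ and $L$ are polynomials in $(H_0+M)^{-1}$ and $(H+M)^{-1}$ with scalar coefficients smooth in $z$ near $\l$, their weighted conjugates are bounded, invertible for large $M$, and continuous in $z$. The middle bracket is handled exactly as in Lemma~\ref{lem:2RB}: the summands $\jap{x}^{-\b}(H_0-z)^{-1}\jap{x}^{-\b}$ and $\jap{x}^{-\b}(H_0-z)^{-2}\jap{x}^{-\b}$, together with the mixed terms carrying a $V$ (where one splits the weight across $V$ using its decay and a choice $\c\in(1/2,\a/2)$), extend continuously to $z=\l+i0$ by the limiting absorption principle for $H_0$ and $H$, using $\b>3/2$.

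Combining the weighted trace-class property of $W$ with this norm limit, the trace in (\ref{diffResolA}) converges as $z\to\l+i0$ with limit continuous in $\l$, giving the first assertion; the ``Moreover'' part then follows exactly as before by feeding this continuous boundary value into (\ref{expSSF}) and transporting through the invariance-principle identity $\x(\l;H,H_0)=-\x(\m(\l);A,A_0)$, the map $\l\mapsto\m(\l)$ being a smooth decreasing bijection of $(0,\infty)$. I expect the main obstacle to be the weighted trace-class estimate for $W$: verifying that, after inserting $\jap{x}^\b$ on both sides of the fully iterated expansion of $A-A_0$, one still lands in $\mathcal{I}_1$ is where the precise interplay between $\a>n+3$, the window (\ref{conbeta}) for $\b$, and the range $n/2-1<\ell\leq n/2$ is used, and where the combinatorics of the expansion must be tracked with care. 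The control of $L_0^{-1}(z)$ and $L^{-1}(z)$ under conjugation by weights is the other new ingredient, but it is comparatively routine once the scalar weight estimates are available.
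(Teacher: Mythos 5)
Your proposal follows the paper's argument essentially verbatim: the iterated resolvent expansion of $A-A_0$ with the trace-ideal criterion and the window $n/2-1<\ell\leq n/2$ to get the (weighted) trace-class property, the factorization $(A_0-\m(z))^{-1}(A-\m(z))^{-1}=L_0^{-1}(z)\{\cdots\}L^{-1}(z)$ with the outer factors controlled by the weight estimates for $\jap{x}^{\pm\c}(H_0+M)^{-1}\jap{x}^{\mp\c}$ and $\jap{x}^{\pm\c}(H+M)^{-1}\jap{x}^{\mp\c}$, and the limiting absorption principle for the middle bracket as in Lemma~\ref{lem:2RB}. The only difference is that you spell out the weighted statement $W\in\mathcal{I}_1$ explicitly, which the paper leaves implicit in "modified accordingly"; this is a correct and welcome amplification, not a different route.
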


\section{The Friedel Sum Rule}
\label{sec:FSR}

In solid state physics \cite{Kittel}, the difference of the number of the states 
given by the right-hand side of (\ref{xiformal}) has been often called the excess charge. 
In this section, we define the excess charge, and show 
that it is equivalent to the SSF. Besides, the SSF is equal to the total phase shift $\th(\l)$ 
which is given by 
\[
e^{2\pi i\th(\l)} = \det S(\l), \quad \l>0,
\]
where $S(\l)$ is the scattering matrix.
By the invariance principle and the Birman-Krein formula \cite{BY}, we have 
\[
\th(\l)=\x(\l;H,H_0)=-\x(\m(\l);A,A_0) 
\]
with $\th(\l)=0$ for $\l<\sigma(H)$. 
Therefore, the excess charge is equal to the total phase shift. 
This is nothing but the Friedel sum rule. 

To begin with, we introduce a cutoff function $\vartheta_R(x) =\vartheta(x/R)$ 
with a large $R>0$ and with $\vartheta\in\ C_0^\infty(\re^n)$ 
satisfying $\vartheta=1$ in a neighborhood of $x=0$. 
Then the excess charge is defined by
\[
Z(\lambda):=\lim_{R\to\infty} \trace\>\left[\vartheta_R(E_H(\l)-E_{H_0}(\l))
\vartheta_R\right],
\]
where $E_A(\l)$ denotes the spectral projection: $\i_{(-\infty,\l]}(A)$.
We want to show that the above limit exists, and that it is
equivalent to the SSF under certain assumptions.

We denote 
\[
Z_R(\l)=\trace\> \left[\vartheta_R (E_H(\l)-E_{H_0}(\l))\vartheta_R\right]
\]
for $\l>0$. Using the notation of Section~\ref{sec:SSF}, we recall that 
\[
E_H(\l)=1-E_A(\m(\l)) =-\lim_{z\to\m(\l)+i0}\Im \frac{1}{\pi}\int_{\c_z}
(A-w)^{-1}dw
\]
in the strong sense. We have 
\begin{align*}
&\vartheta_R (E_H(\l)-E_{H_0}(\l))\vartheta_R \\
&\qquad = -\lim_{z\to\m(\l)+i0} \Im \frac{1}{\pi} \int_{\c_z}
\vartheta_R\left[(A-w)^{-1}-(A_0-w)^{-1}\right]\vartheta_Rdw\\
&\qquad =\lim_{z\to\m(\l)+i0} \Im \frac{1}{\pi} \int_{\c_z}
\bigbrac{\vartheta_R(A-w)^{-1}\jap{x}^{-\b}W
\jap{x}^{-\b}(A_0-w)^{-1}\vartheta_R}dw
\end{align*}
in the same way as in the proof of Proposition~\ref{basicpro}. 
The integrand $[\cdots]$ in the right-hand side of the second equality is 
of the trace class, and is continuous in $w$ up to the boundary. 
Actually, $W\in \mathcal{I}_1$ as we proved in the preceding section, 
and we also have, in the same way, 
that $\vartheta_R(A-w)^{-1}\jap{x}^{-\b}$ and $\jap{x}^{-\b}(A_0-w)^{-1}\vartheta_R$ 
have norm limits as $w\to\m(\l)+i0$. Thus we learn that 
\[
Z_R(\l) =\Im \frac{1}{\pi} \int_{\c_{\m(\l)}}
\trace\,\bigbrac{\vartheta_R(A-w)^{-1}\jap{x}^{-\b}W
\jap{x}^{-\b}(A_0-w)^{-1}\vartheta_R} dw.
\]
In particular, this implies the existence of $Z_R(\l)$.
We show

\begin{thm}
\label{thm:EXCH}
Let $\l\in(0,\infty)$. Then
\[
\lim_{R\to\infty} Z_R(\l) =\x(\l;H,H_0).
\]
\end{thm}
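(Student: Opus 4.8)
The plan is to reduce the assertion to the two contour representations already available and to isolate the single discrepancy $\vartheta_R^2$ versus $1$. The computation preceding the statement gives
\[
Z_R(\l)=\frac1\pi\Im\int_{\c_{\m(\l)}}\trace\bigbrac{\vartheta_R(A-w)^{-1}\jap{x}^{-\b}W\jap{x}^{-\b}(A_0-w)^{-1}\vartheta_R}\,dw .
\]
On the other hand, since $\jap{x}^{-\b}(A_0-w)^{-1}(A-w)^{-1}\jap{x}^{-\b}$ has a norm limit as $w\to\m(\l)+i0$ by Lemma~\ref{lem:2RB} (and its higher-dimensional analogue) while $W\in\mathcal{I}_1$, the limit defining $\x(\m(\l);A,A_0)$ is the integral over $\c_{\m(\l)}$ itself; combining the representation (\ref{expSSF}) for the pair $(A,A_0)$, the resolvent identity, and the relation $\x(\l;H,H_0)=-\x(\m(\l);A,A_0)$ yields the companion formula
\[
\x(\l;H,H_0)=\frac1\pi\Im\int_{\c_{\m(\l)}}\trace\bigbrac{(A-w)^{-1}\jap{x}^{-\b}W\jap{x}^{-\b}(A_0-w)^{-1}}\,dw
\]
over the \emph{same} contour. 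Using $\jap{x}^{-\b}W\jap{x}^{-\b}=A-A_0$ (from (\ref{defW})) and the cyclicity of the trace, subtraction collapses the two formulas to
\[
Z_R(\l)-\x(\l;H,H_0)=\frac1\pi\Im\int_{\c_{\m(\l)}}\trace\bigbrac{(\vartheta_R^2-1)(A-w)^{-1}(A-A_0)(A_0-w)^{-1}}\,dw ,
\]
so it suffices to prove that this integral tends to $0$ as $R\to\infty$.

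First I would treat the integrand pointwise in the interior of the contour. There $\Im w>0$, the resolvents are bounded, and hence $K(w):=(A-w)^{-1}(A-A_0)(A_0-w)^{-1}\in\mathcal{I}_1$. Because $\vartheta_R^2-1\to0$ strongly with $\sup_R\norm{\vartheta_R^2-1}\le1$, the elementary fact that $\trace[M_RK]\to0$ whenever $K\in\mathcal{I}_1$ and $M_R\to0$ in the weak operator topology with $\sup_R\norm{M_R}<\infty$ gives $\trace[(\vartheta_R^2-1)K(w)]\to0$ for almost every $w$ on the contour. It then remains to produce a majorant, integrable over the finite contour and uniform in $R$, so that dominated convergence applies. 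Rewriting the integrand by cyclicity as $\trace[W(B_R(w)-B(w))]$, where $B_R(w)=\jap{x}^{-\b}(A_0-w)^{-1}\vartheta_R^2(A-w)^{-1}\jap{x}^{-\b}$ and $B(w)$ is the same with $\vartheta_R^2$ replaced by $1$, the inequality $\abs{\trace[W(B_R(w)-B(w))]}\le\norm{W}_{\mathcal{I}_1}(\norm{B_R(w)}+\norm{B(w)})$ and the boundedness of $\norm{B(w)}$ reduce the whole matter to the uniform estimate
\[
\sup_{R}\ \sup_{w\in\c_{\m(\l)}}\ \bignorm{\jap{x}^{-\b}(A_0-w)^{-1}\vartheta_R^2(A-w)^{-1}\jap{x}^{-\b}}<\infty .
\]

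This estimate is the crux, and I expect it to be the main obstacle. I would prove it by the decomposition already used in Lemma~\ref{lem:2RB}: expanding $(A_0-w)^{-1}$ and $(A-w)^{-1}$ in $\m(z)=(z+M)^{-\ell}$ reduces $B_R(w)$ to a finite sum of terms of the form $\jap{x}^{-\b}(H_0-z)^{-i}\vartheta_R^2(H-z)^{-j}\jap{x}^{-\b}$, possibly with one interior factor $V$. The terms in which $\vartheta_R^2$ stands next to an outer weight, and those carrying a factor $V$ (where $\jap{x}^\b V\jap{x}^\b\in B(L^2(\re^n))$, valid since $\alpha>2\b$ by (\ref{conbeta}), supplies the missing interior weight), are uniformly bounded at once by the limiting absorption estimates together with $0\le\vartheta_R^2\le1$; after using $V$ the latter reduce to the same block as well. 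The genuinely delicate term is the one with two free resolvents flanking the non-decaying cutoff,
\[
\jap{x}^{-\b}(H_0-z)^{-1}(1-\vartheta_R^2)(H_0-z)^{-1}\jap{x}^{-\b},
\]
obtained after subtracting the uniformly bounded piece $\jap{x}^{-\b}(H_0-z)^{-2}\jap{x}^{-\b}$; here $1-\vartheta_R^2$ equals $1$ for $\abs{x}\gtrsim R$, so the outer weights cannot simply be distributed across it.

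I would control this remaining block by the standard commutator computation, using
\[
[(H_0-z)^{-1},\vartheta_R^2]=-(H_0-z)^{-1}[H_0,\vartheta_R^2](H_0-z)^{-1},\qquad [H_0,\vartheta_R^2]=-(\triangle\vartheta_R^2)-2(\nabla\vartheta_R^2)\cdot\nabla ,
\]
whose coefficients are $O(R^{-1})$ and supported in the shell $\abs{x}\sim R$. Each commutation thus replaces the non-decaying cutoff by derivatives localized in that shell, where the decaying weights $\jap{x}^{-\b}$ are of size $R^{-\b}$; iterating the commutations and combining with the limiting absorption bounds for the resolvents should yield a bound uniform in $R$ and $w$ up to the boundary (in fact $o(1)$ as $R\to\infty$), the higher-power blocks arising for $n\ge4$ being treated in the same way. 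Once this uniform majorant is secured, dominated convergence yields $Z_R(\l)-\x(\l;H,H_0)\to0$, completing the argument.
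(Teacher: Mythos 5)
Your reduction is exactly the paper's: you arrive at the same contour formula for $Z_R(\l)-\x(\l;H,H_0)$, and you correctly isolate the block
\[
\jap{x}^{-\b}(H_0-z)^{-1}(1-\vartheta_R^{\,2})(H_0-z)^{-1}\jap{x}^{-\b},\qquad z\sim\l\pm i0,
\]
as the crux (the paper proves the slightly stronger statement that this is $O(R^{-\e})$ in norm, uniformly near the real axis, and concludes by uniform convergence rather than your pointwise-plus-dominated-convergence scheme; that organizational difference is harmless). The gap is in your treatment of this block. The commutator iteration you propose does not close: writing $\chi_R=1-\vartheta_R^{\,2}$, one commutation gives the uniformly bounded term $\chi_R\jap{x}^{-\b}(H_0-z)^{-2}\jap{x}^{-\b}$ plus $\jap{x}^{-\b}(H_0-z)^{-1}[H_0,\chi_R](H_0-z)^{-2}\jap{x}^{-\b}$. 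To control the latter by limiting absorption you must insert weights $\jap{x}^{\c}$ with $\c>1/2$ around the first resolvent and $\jap{x}^{\d}$ with $\d>3/2$ around the squared resolvent, while the middle factor $\jap{x}^{\c}(\nabla\chi_R)\cdot\nabla\,\jap{x}^{\d}$ lives on the shell $|x|\sim R$ and has size $R^{\c+\d-1}$; boundedness would require $\c+\d\le1$, which is incompatible with $\c>1/2$, $\d>3/2$. Each further commutation gains one factor $R^{-1}$ from differentiating the cutoff but adds one more resolvent, whose limiting-absorption bound demands strictly more than $R^{1}$ worth of additional weight on the shell. The bookkeeping loses at every step, and this is not an artifact: the kernel of $(H_0-\l-i0)^{-1}$ behaves like $e^{i\sqrt{\l}|x-y|}|x-y|^{-(n-1)/2}$, so the block is an oscillatory integral over the shell whose boundedness (let alone decay) rests on non-stationary phase, i.e.\ on the incoming/outgoing structure of the free resolvent at infinity — information the operator-level commutator expansion with polynomial weights does not see.

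This is precisely why the paper's proof of the corresponding lemma goes microlocal: it writes $(1-\vartheta_R^{\,2})\y(H_0)=R^{-\e}\jap{x}^{\e}(P_++P_-)$ with pseudodifferential cutoffs $P_\pm$ whose symbols contain $\rho_\pm(\hat x\cdot\hat\x)$, and applies the Isozaki--Kitada microlocal resolvent estimates ($P_-$ against $(H_0-z)^{-1}\jap{x}^{-\b}$ on one side, the adjoint construction for $P_+$ on the other) to extract the factor $R^{-\e}$. Without this (or an equivalent propagation/non-stationary-phase input), neither your uniform majorant $\sup_R\sup_w\norm{B_R(w)}<\infty$ nor the convergence to zero is established, so the proof as proposed is incomplete at its decisive step. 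Everything before that point — the trace manipulations, the $O(R^{-\b})$ bound away from the spectrum, the splitting off of the $V$-term using $\jap{x}^{\b}V\jap{x}^{\b}\in B(L^2)$ — matches the paper and is fine.
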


\begin{proof}
{From} (\ref{expSSF}), (\ref{diffResolA}) and the above representation of $Z_R(\l)$, we
have
\begin{align*}
&Z_R(\l)-\x(\l;H,H_0) \\
&=Z_R(\l)+\x(\mu(\l);A,A_0) \\
&=\Im \frac{1}{\pi}\int_{\c_{\m(\l)}} \trace\bigbrac{ W\jap{x}^{-\b}(A_0-w)^{-1}
(\vartheta_R^{\ 2}-1)(A-w)^{-1}\jap{x}^{-\b}}dw.
\end{align*}
Therefore, it suffices to show
\[
\bignorm{\jap{x}^{-\b}(A_0-w)^{-1} (1-\vartheta_R^{\ 2})(A-w)^{-1}\jap{x}^{-\b}}
\to 0\quad \text{as $R\to\infty$}
\]
uniformly in $w\in{\c_{\m(\l)}}$. If $w$ is away from $\s(A)\cup\s(A_0)$, then
\begin{align*}
&\bignorm{\jap{x}^{-\b}(A_0-w)^{-1} (1-\vartheta_R^{\ 2})(A-w)^{-1}\jap{x}^{-\b}} \\
&= \bignorm{\jap{x}^{-\b} (A_0-w)^{-1}\jap{x}^{\b} (\jap{x}^{-\b}
(1-\vartheta_R^{\ 2}))
(A-w)^{-1}\jap{x}^{-\b} } \\
&\leq\bignorm{\jap{x}^{-\b} (A_0-w)^{-1}\jap{x}^{\b} }\cdot
\norm{\jap{x}^{-\b} (1-\vartheta_R^{\ 2})}\cdot
\bignorm{(A-w)^{-1} }=O(R^{-\b})
\end{align*}
locally uniformly in $w$. Thus it suffices to consider the case
$w\sim\mu(\l)\pm i0$.

As well as in \S\S2.2, we have 
\begin{align*}
&\jap{x}^{-\b} (A_0-\m(z))^{-1}(1-\vartheta_R^2)(A-\m(z))^{-1}\jap{x}^{-\b} \\
&\qquad = \bigpare{\jap{x}^{-\b} L_0(z)^{-1} \jap{x}^\b}\times 
\jap{x}^{-\b} \bigbrac{1+(z+M)(H_0-z)^{-1}} (1-\vartheta_R^2)\times \\
&\qquad\qquad  \times \bigbrac{1+(z+M)(H-z)^{-1}} \jap{x}^{-\b}
\times \bigpare{\jap{x}^{\b} L(z)^{-1}\jap{x}^{-\b}}. 
\end{align*}
Therefore, it is enough to show 
\[
\bignorm{\jap{x}^{-\b}(H_0-z)^{-1} (1-\vartheta_R^{\ 2})(H-z)^{-1}\jap{x}^{-\b}}
\to 0\quad \text{as $R\to\infty$}
\]
if $z\sim \l\pm i0$ in $\co_\pm =\{ z\,|\, \pm\Im z\geq 0\}$. Since
\begin{align*}
&\jap{x}^{-\b}(H_0-z)^{-1}(1-\vartheta_R^{\ 2})(H-z)^{-1}\jap{x}^{-\b} \\
&=\jap{x}^{-\b} (H_0-z)^{-1}(1-\vartheta_R^{\ 2})(H_0-z)^{-1} \jap{x}^{-\b} \\
&\quad
-\jap{x}^{-\b}(H_0-z)^{-1}(1-\vartheta_R^{\ 2})(H_0-z)^{-1}V(H-z)^{-1}\jap{x}^{-\b},
\end{align*}
Theorem~\ref{thm:EXCH} now follows from the next lemma.
\end{proof}

\begin{lem}
There exist $\mathcal {U}$: a neighborhood of $\l\pm i0$ in $\co_\pm$,
$\e>0$ and
$C>0$ such that
\[
\bignorm{\jap{x}^{-\b} (H_0-z)^{-1}(1-\vartheta_R^{\ 2}) (H_0-z)^{-1} \jap{x}^{-\b}
} \leq C R^{-\e}
\]
for $\beta>3/2$ and $z\in\mathcal {U}$.
\end{lem}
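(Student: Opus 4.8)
Set $\Psi_R:=1-\vartheta_R^{\,2}$, so that $0\le\Psi_R\le1$, with $\Psi_R=0$ on $\{\abs{x}\le c_0R\}$ and $\Psi_R=1$ on $\{\abs{x}\ge C_0R\}$ for suitable $0<c_0<C_0$. Because $\Psi_R\to1$ at infinity, the smallness of the norm cannot be read off from the weights alone; it must be produced by the oscillation of the free resolvent, so the plan is to pass to the integral kernel and estimate it by a stationary/non-stationary phase analysis. For $z$ in a small neighbourhood $\mathcal U$ of $\l\pm i0$ the kernel $G_z(x-y)$ of $(H_0-z)^{-1}$ obeys the classical Hankel asymptotics $\abs{G_z(x-y)}\le C\jap{x-y}^{-(n-1)/2}$ off the diagonal, with oscillatory factor $e^{\pm i\sqrt z\,\abs{x-y}}$ and an integrable local singularity, uniformly on $\mathcal U$ (one works at $\Im z>0$, where $G_z$ decays exponentially, and lets $\Im z\dto0$). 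Thus the operator has kernel $\jap{x}^{-\b}\jap{x'}^{-\b}I_R(x,x')$ with
\[
I_R(x,x')=\int_{\re^n} G_z(x-y)\,\Psi_R(y)\,G_z(y-x')\,dy ,
\]
and it suffices to estimate $I_R$ and then integrate against the two weights.

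First I would treat the region where both arguments are near the origin, say $\abs{x},\abs{x'}\le\tfrac12 c_0R$. There the segment $[x,x']$ is disjoint from $\supp\Psi_R$, so on $\supp\Psi_R$ the phase $y\mapsto\sqrt z\,(\abs{x-y}+\abs{y-x'})$ is non-stationary, its gradient $\sqrt z(\widehat{y-x}+\widehat{y-x'})$ being bounded below in modulus. Repeated integration by parts in $y$ then gains a factor $R^{-1}$ at each step — each derivative falls either on $\Psi_R$, giving a coefficient $O(R^{-1})$ supported in the transition shell $\{\abs{y}\sim R\}$, or on the amplitude, which is $O(\abs{y}^{-1})$ there — while the boundary term at infinity vanishes since $\Im\sqrt z\ge0$. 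This gives $\abs{I_R(x,x')}\le C_NR^{-N}$ for every $N$, so this part of the operator is negligible.

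The crux is the far region. Where $\Psi_R\equiv1$, i.e.\ for $\abs{y}\ge C_0R$, the middle factor is simply $(H_0-z)^{-2}$, whose kernel is translation invariant of size $\jap{x-x'}^{-(n-3)/2}$ (of constant modulus when $n=3$). Hence for $\abs{x},\abs{x'}\ge c_1R$ the operator is, to leading order, the compression of the bounded operator $\jap{x}^{-\b}(H_0-z)^{-2}\jap{x}^{-\b}$ to large $\abs{x}$: when $n=3$ this is essentially rank one with norm $\bignorm{\i_{\{\abs{x}\ge C_0R\}}\jap{x}^{-\b}}^2\sim R^{\,n-2\b}$, which tends to $0$ precisely for $\b>n/2$, i.e.\ for $\b>3/2$ in the borderline dimension $n=3$; when $n\ge4$ the kernel $\jap{x-x'}^{-(n-3)/2}$ decays and a convolution estimate gives a comparable gain. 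The remaining contributions — the transition shell $\{\abs{y}\sim R\}$, where $\Psi_R$ carries the gain $O(R^{-1})$, and the mixed region with one argument near and one far — I would again control by integration by parts, now exploiting that at least one of the weights is $\le CR^{-\b}$.

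The step I expect to be the main obstacle is the far region made rigorous: the phase there is degenerate (constant along the whole segment $[x,x']$), so one must estimate the oscillatory integral through its stationary manifold and obtain an operator–norm (not merely Hilbert–Schmidt) bound that survives with only $\b>3/2$ in every dimension, uniformly as $z\to\l\pm i0$. This is exactly the microlocal/radiation decay of the free resolvent encoded in the weighted estimates of Agmon, and it is stronger than the plain limiting absorption bound, which by itself would only yield an $O(1)$ contribution here.
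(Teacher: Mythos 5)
There is a genuine gap here, and you have in fact located it yourself: the contribution you call the ``far region'' (together with the mixed region, one variable near and one far) is where the entire difficulty of the lemma sits, and your proposal does not supply the mechanism that actually produces the factor $R^{-\e}$ there. Two concrete problems. First, in the mixed region you propose to ``exploit that at least one of the weights is $\leq CR^{-\b}$''; but peeling a factor $R^{-\e}$ off one outer weight leaves you needing a bound, uniform as $\Im z\dto 0$, on $\jap{x}^{-\b}(H_0-z)^{-1}(1-\vartheta_R^{\,2})(H_0-z)^{-1}\jap{x}^{-\b+\e}$. Since $1-\vartheta_R^{\,2}$ does not decay (it is $\equiv 1$ at infinity, exactly where the free resolvent kernel has no decay), no factorization through the limiting absorption principle $\jap{x}^{-s}(H_0-z)^{-1}\jap{x}^{-s}=O(1)$, $s>1/2$, closes: inserting $\jap{x}^{s}(1-\vartheta_R^{\,2})\jap{x}^{s}$ costs $R^{2s}$, more than the $R^{-\e}$ you saved. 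Second, in the far region the modulus of the kernel is useless: for $\b$ just above $3/2$ and general $n$, neither Schur's test nor a Hilbert--Schmidt bound on $\i_{\{|x|\ge cR\}}\jap{x}^{-\b}\jap{x-x'}^{-(n-3)/2}\jap{x'}^{-\b}\i_{\{|x'|\ge cR\}}$ converges, and you correctly observe that the plain limiting absorption bound would only yield $O(1)$. So the one estimate the lemma really requires is asserted (``this is exactly the microlocal/radiation decay of the free resolvent'') rather than proved. The near--near region via non-stationary phase is fine, but it is the easy part.

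The paper's proof supplies precisely the missing ingredient, and it is microlocal rather than kernel-based: after an energy localization $\y(H_0)$, the function $1-\vartheta_R^{\,2}$ is split into outgoing and incoming pseudodifferential pieces $P_\pm$ with symbols containing $\rho_\pm(\hat x\cdot\hat\x)$, normalized so that $(1-\vartheta_R^{\,2})\y(H_0)=R^{-\e}\jap{x}^{\e}(P_++P_-)$; the prefactor $R^{-\e}$ comes for free because $1-\vartheta_R^{\,2}$ is supported in $\{|x|\gtrsim R\}$, making $R^{\e}\jap{x}^{-\e}(1-\vartheta_R^{\,2})$ bounded uniformly in $R$. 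The Isozaki--Kitada microlocal resolvent estimates then give $\bignorm{\jap{x}^{\d}P_-(H_0-z)^{-1}\jap{x}^{-\b}}\le C$ for $z\in\co_+$ near $\l$ with $\d>1/2$ --- a \emph{growing} weight on the left, strictly stronger than the limiting absorption principle and exactly the incoming/outgoing dichotomy your last paragraph gestures at --- together with the dual estimate for $P_+$ acting on the left of the other resolvent. These absorb the leftover $\jap{x}^{\e}$ and close the bound with $0<\e<\d-1/2<\b-3/2$. To complete your kernel-based plan you would have to reprove this dichotomy by hand (stationary phase transverse to the segment $[x,x']$, keeping track of the sign of $\Im\sqrt z$), which amounts to rederiving Isozaki--Kitada; invoking their theorem, as the paper does, is the efficient route.
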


\begin{proof}
We consider the case $\l+i0$ only. The other case is similar.
It is easy to observe that it suffices to show
\[
\bignorm{\jap{x}^{-\b} (H_0-z)^{-1}(1-\vartheta_R^{\ 2}) \y(H_0)(H_0-z)^{-1}
\jap{x}^{-\b} } \leq C R^{-\e},
\]
where $\y\in C_0^\infty((0,\infty))$ such that $\y=1$ in a neighborhood
of $\l$.
In order to show this, we use a Mourre-type microlocal resolvent estimate of
Isozaki-Kitada \cite{IK}. See also \cite{M}. We apply their result with $H=H_0$. 
Let $\rho_\pm\in C^\infty([-1,1])$ such that
\[
\rho_+(t)+\rho_-(t)=1; \quad
\rho_\pm(t)= 0 \quad\text{if } \pm t<-\frac12.
\]
We also set $\d$ and $\e$ so that
\[
\frac12 <\d<\b-1; \quad 0<\e<\d-\frac{1}{2}.
\]
We write 
\[
p_\pm(x,\x) = R^\e \jap{x}^{-\e}\left\{1-[\vartheta_R(x)]^2\right\}\rho_\pm(\hat x\cdot
\hat\x)\y(|\x|^2),
\]
where $\hat x=x/|x|$. We quantize $p_\pm$ by the usual Kohn-Nirenberg
pseudodifferential
operator calculus:
\[
P_\pm f(x) =p_\pm(x,D_x)f(x)
= (2\pi)^{-n/2} \int p_\pm(x,\x) e^{ix\cdot\x} \hat f(\x)d\x.
\]
Then we have
\[
(1-\vartheta_R^{\ 2})\y(H_0) = R^{-\e}\jap{x}^\e (P_++P_-),
\]
and $p_-(x,\x)$ satisfies the assumptions of Theorem~1 (or Theorem~1.2)
of \cite{IK},
uniformly in $R>1$, and we obtain
\[
\bignorm{\jap{x}^\d P_- (H_0-z)^{-1} \jap{x}^{-\b} } \leq C
\quad\text{for }z\sim\l, z\in\co_+.
\]
Similarly, we can apply the same argument to $(\jap{x}^\e P_+\jap{x}^{-\e})^*$ 
instead of $P_-$. In fact, $(\jap{x}^\e P_+\jap{x}^{-\e})^*$ is also a pseudodifferential 
operator, and its symbol can be computed by asymptotic expansions, up to an error of 
$O(\jap{x}^{-\infty})$. Let $\tilde p_+(x,\x)$ be the symbol of  $(\jap{x}^\e P_+\jap{x}^{-\e})^*$.
Then it has the same support property with $p_+(x,\x)$. In particular, we can show 
\[
\tilde p_+(x,\x)= p_+(x,\x)+ i(\pa_x\cdot \pa_\x) p_+(x,\x) -i \pa_\x p_+(x,\x)\cdot (\e x/\jap{x}^2) +O(\jap{x}^{-2}).
\]
Thus we obtain
\[
\bignorm{\jap{x}^{-\b} (H_0-z)^{-1} \jap{x}^\e P_+\jap{x}^{\d-\e}} \leq C
\quad\text{for }z\sim\l, z\in\co_+, 
\]
as well. Combining these, we have
\begin{align*}
&\bignorm{\jap{x}^{-\b} (H_0-z)^{-1}(1-\vartheta_R^{\ 2})\y(H_0)(H_0-z)^{-1}
\jap{x}^{-\b} }\\
&\quad= R^{-\e} \bignorm{\jap{x}^{-\b} (H_0-z)^{-1}\jap{x}^\e
(P_++P_-)(H_0-z)^{-1}\jap{x}^{-\b}}\\
&\quad\leq R^{-\e} \Bigbra{\bignorm{\jap{x}^{-\b}
(H_0-z)^{-1}\jap{x}^{-(\d-\e)}}\cdot
\bignorm{\jap{x}^\d P_- (H_0-z)^{-1} \jap{x}^{-\b}}\\
&\qquad\quad +\bignorm{\jap{x}^{-\b}(H_0-z)^{-1} \jap{x}^\e
P_+\jap{x}^{\d-\e}}\cdot
\bignorm{\jap{x}^{-(\d-\e)}(H_0-z)^{-1}\jap{x}^{-\b}}} \\
&\quad\leq CR^{-\e}
\end{align*}
for $z\sim\l$, $z\in\co_+$.
\end{proof}

\thanks{\textbf{Acknowledgement:} It is a pleasure to thank the following people for 
discussions and correspondence: Jacques Friedel, Arne Jensen and Hal Tasaki. SN is partially supported by 
JSPS Grant Kiban (A) 21244008 (2009-2013). }


\end{document}